\newtheorem{theorem}{Theorem}
\newtheorem{lemma}{Lemma}
\newtheorem{corollary}{Corollary}
\begin{document}

\title{Cache-Aided Heterogeneous Networks: Coverage and Delay Analysis}

\author{\large Mohamed A. Abd-Elmagid$^*$$^\dagger$, Ozgur Ercetin$^\dagger$, and Tamer ElBatt$^*$$^\ddagger$ \\ [.05in]
\small  \begin{tabular}{c} $^*$Wireless Intelligent
Networks Center (WINC), Nile University, Giza, Egypt.\\$^\dagger$ Faculty of Engineering and Natural Sciences, Sabanci University, Istanbul, Turkey. \\
$^\ddagger$Dept. of EECE, Faculty of Engineering, Cairo University, Giza, Egypt.
\end{tabular}
\thanks{This work was supported in part by the European Unions Horizon 2020 research and innovation programme under the Marie Skodowska-Curie grant agreement No 690893.}
}
\maketitle

\begin{abstract}
 This paper characterizes the performance of a generic $K$-tier cache-aided heterogeneous network (CHN), in which the base stations (BSs) across tiers differ in terms of their spatial densities, transmission powers, pathloss exponents, activity probabilities conditioned on the serving link and placement caching strategies. We consider that each user connects to the BS which maximizes its average received power and at the same time caches its file of interest. Modeling the locations of the BSs across different tiers as independent homogeneous Poisson Point processes (HPPPs), we derive closed-form expressions for the coverage probability and local delay experienced by a typical user in receiving each requested file. We show that our results for coverage probability and delay are consistent with those previously obtained in the literature for a single tier system.
\end{abstract}

\section{Introduction}
Cache-aided small cell networks (CSNs) have recently attracted considerable attention in the literature \cite{golrezaei2013femtocaching,blaszczyszyn2015optimal, yang2016analysis,bacstu2015cache,chen2016probabilistic}. \cite{golrezaei2013femtocaching} proposed a new caching architecture, namely, Femtocaching, in which the small cell base stations (SBSs) are utilized as distributed caching devices with either small or non-existing backhaul capacity but with considerable storage space. Attributed to the nature of randomness in both the mobile user locations and the available contents inside caches, stochastic geometry is considered to be a very relevant tool for the analysis of CSNs \cite{blaszczyszyn2015optimal, yang2016analysis, bacstu2015cache, chen2016probabilistic}. In \cite{blaszczyszyn2015optimal}, a probabilistic placement caching strategy is adopted at each SBS, in which each of the popular files is cached with a certain probability. The prime goal was to characterize the optimal caching probabilities for different files to maximize the hit probability subject to the finite size caches constraints. Modeling the distributions of mobile users and SBSs as HPPPs facilitates the ability to obtain closed-form expressions of various performance metrics for different setups of CSNs \cite{yang2016analysis, bacstu2015cache, chen2016probabilistic}, e.g., coverage probability and average achievable rate. \cite{yang2016analysis} considered device-to-device (D2D) caching along with small cell caching through the assumption that a subset of the mobile terminals are equipped with finite cache sizes. The SBSs were assumed to cache the same copy of specific popular contents in \cite{yang2016analysis, bacstu2015cache}. On the contrary, \cite{chen2016probabilistic} considered that the content library is  partitioned into non-overlapping subsets of files, and each SBS can only cache one of them with a certain probability, which provides a more flexibility compared to caching the same copy of contents. Differently from prior works, in this paper we characterize the performance of a generic $K$-tier \textit{cache-aided heterogeneous network} (CHN), in which each tier may adopt a different probabilistic placement caching strategy.

Our main contribution in this paper is to characterize the performance of $K$-tier CHNs. We study a generic model for a $K$-tier CHN, in which the BSs across tiers are distinguished by their spatial densities, transmission powers, pathloss exponents and activity probabilities conditioned on the serving link. Moreover, we consider the generic scenario in which the BSs across different tiers may employ different placement caching strategies. Modeling the locations of BSs in each tier as an independent HPPP, we derive closed-form expressions for the coverage probability and local delay experienced by a typical user while trying to obtain each cached file in the network. We further obtain a simplified expressions for the special case of having identical pathloss exponents across tiers, through which we show the consistence of our obtained results with prior works for the single tier heterogeneous network scenario.
\section{System Model}
We study a generic model for a CHN consisting of $K$ independent network tiers. The BSs across tiers are distinguished by their spatial densities, transmission powers, pathloss exponents, activity probabilities conditioned on the serving link and placement caching strategies. Particularly, the locations of BSs in the $j$-th tier are modeled as a HPPP $\phi_{j}$ with spatial density $\lambda_{j}$. We consider a content library composed of $M$ different files and assume that each tier's BSs may adopt a different probabilistic placement caching strategy. Let $\mathbf{p}_{j} = \{p_{1j},p_{2j},\ldots,p_{Mj}\}$ denote the probabilistic placement caching strategy employed by the $j$-th tier's BSs, where $p_{mj}$ denotes the probability that a j-th tier's BS caches file $m$ such that $\sum_{m=1}^{M}{p_{mj}} = S_{j} \;\text{files}, \forall j$, where $S_{j}$ is the cache size associated with the $j$-th tier's BSs. Conditioned on the serving link, each interferer from the $j$-th tier is assumed to be independently active with probability $a_{j}$. This is due to the fact that the BSs may not be always active to save energy or to mitigate interference, etc.

 According to Slivnyak's theorem \cite{haenggi2012stochastic}, an existing point in the process does not change the statistical distribution of other points of the PPP. Therefore, without loss of generality, we focus on the downlink analysis at a typical user located at the origin. Given that the typical user is associated with BS $i$ in the $j$-th tier located at $Y_{ji} \in \phi_{j}$, the received power at the typical user is given by

\vspace{-0.3cm}
\small
\begin{align}
P_{j}^{\text{recv}} = P_{j} h_{ji} \parallel Y_{ji} \parallel^{-\alpha_{j}},
\end{align}
\normalsize
where $P_{j}$ denotes the transmission power of the $j$-th tier, $\alpha_{j}$ denotes the pathloss exponent of the $j$-th tier and $h_{ji}$ is the channel power gain from BS $i$ of the $j$-th tier and the typical user. We assume independent Rayleigh fading channel coefficients with unit average power, i.e., $h_{ji}$ is an exponential random variable with unit mean.

We assume an open access network, i.e., the typical user is allowed to connect to any tier without any restrictions. We consider a \textit{cell association strategy}\footnote{Note that $P_{j} R_{nj}^{-\alpha_{j}}$ is a long-term average of $P_{j}^{\text{recv}}$ and fading is averaged out, and hence it does not include $h_{ji}$.}, where the typical user connects to the BS which offers the strongest \textit{average} received power and caches its requested file. For instance, the index of the tier of the BS serving the user request file $n$ is given by

\vspace{-0.3cm}
\small
\begin{align}\label{cas}
k = \text{arg}\; \max_{\substack{j \in \mathcal{K}}} \;P_{j} R_{nj}^{-\alpha_{j}},
\end{align}
\normalsize
where $\mathcal{K} = \{1,2,\cdots,K\}$ and $R_{nj}$ denotes the distance between the user and the closest BS in the $j$-th tier that caches file $n$. 

Throughout this paper, we emphasize our analysis on a scenario where the typical user tries to obtain file $n$ cached in the network. According to the thinning theorem of the PPP \cite{haenggi2012stochastic}, the distribution of the $k$-th tier's BSs that caches file $n$ can be viewed as a thinned HPPP $\phi_{k}^{n}$ with density $p_{nk} \lambda_{k}$. For instance, for a typical user associated with a BS $\in \phi_{k}^{n}$ located at a random distance $x_{nk}$ from the user, the signal to interference ratio (SIR) at the typical user is given by

\vspace{-0.3cm}
\small
\begin{align}
\text{SIR}_{k}(x_{nk}) = \dfrac{P_{k} h_{k0} x_{nk}^{-\alpha_{k}}}{\sum_{j=1}^{K}\sum_{i \in \phi_{j} \setminus B_{k0}}{t_{ji} P_{j} h_{ji} \parallel Y_{ji} \parallel^{-\alpha_{j}}}},
\end{align}
\normalsize
where $B_{k0}$ is the index of the serving BS and $t_{ji}$ is an indicator random variable which represents the activity of BS $i$ in the $j$-th tier. Thus, $t_{ji}$ takes value $1$ with probability $a_{j}$ and $0$ otherwise. Note that we assume the network is interference limited, and hence the thermal noise power is ignored compared to interference power.
%
\section{Coverage Probability Analysis}
In this section we derive the coverage probability for the considered $K$-tier CHN. The coverage probability is defined as the probability that a typical user is able to achieve some threshold SIR, denoted $\tau$, when the user tries to obtain its file of interest from its associated BS. Recall that we focus on a scenario where the typical user tries to obtain file $n$. Since the typical user is associated with at most one tier, from the total probability law, the coverage probability of obtaining file $n$ can be expressed as
\begin{align}\label{total.cov}
C_{n} = \sum_{k=1}^{K}{A_{k} C_{nk}},
\end{align}
where $A_{k}$ is the probability that the typical user is associated with the $k$-th tier and $C_{nk}$ is the probability of coverage when the user is associated with the $k$-th tier to obtain file $n$. Thus, $C_{nk}$ is given by

\small
\begin{align}\label{cov.nk}
\nonumber C_{nk} &= \mathbb{E}_{x_{nk}}\left[{\mathbb{P}\left(\text{SIR}_{k}\left(x_{nk}\right) > \tau \left.\right| x_{nk}\right)}\right] \\
\nonumber &= \int_{0}^{\infty} \mathbb{P}\left(h_{k0} > \dfrac{x_{nk}^{\alpha_{k}}\tau I}{P_{k}}\left.\right| x_{nk}\right) f_{X_{nk}}(x_{nk}) \;\text{d}x_{nk} \\
\nonumber &\overset{(a)}{=} \int_{0}^{\infty} \mathbb{E}_{I}\left[\text{exp}\left(-\dfrac{x_{nk}^{\alpha_{k}}\tau I}{P_{k}}\right)\right] f_{X_{nk}}(x_{nk}) \;\text{d}x_{nk} \\
 &\overset{(b)}{=} \int_{0}^{\infty} \prod_{j=1}^{K}{\mathcal{L}_{I_{j}}\left(\dfrac{x_{nk}^{\alpha_{k}}\tau}{P_{k}}\right)} f_{X_{nk}}(x_{nk}) \;\text{d}x_{nk},
\end{align}
\normalsize
where $I_{j}$ denotes the interference power of the $j$-th tier such that $I = \sum_{j=1}^{K}{I_{j}}$, $\mathcal{L}_{I_{j}}$ denotes the Laplace transform of $I_{j}$, $f_{X_{nk}}(x_{nk})$ is the probability density function (PDF) of the distance between the typical user and the serving BS in the $k$-th tier. Note that (a) follows from $h_{k0} \sim \text{exp}(1)$ and (b) follows from the definition of Laplace transform along with that fact that $I_{j}$, $j=1,2,\cdots,K,$ are independent random variables. Before proceeding to derive the Laplace transform of each tier, we provide the PDF of the distance between the typical user and the serving BS in the $k$-th tier, $f_{X_{nk}}(x_{nk})$, in the following Lemma.

\begin{lemma}\label{lemma1}
The PDF of the distance between the typical user and its serving BS $\in \phi_{k}^{n}$, that caches file $n$, is given by
\small
\begin{align}
\nonumber f_{X_{nk}}(x_{nk}) = \dfrac{2 \pi p_{nk} \lambda_{k} x_{nk}}{A_{k}} \text{exp} \left[- \pi \sum_{j=1}^{K}{p_{nj} \lambda_{j} \bar{P}_{j}^{2/\alpha_{j}} x_{nk}^{2/ \bar{\alpha}_{j}}}\right],
\end{align}
\normalsize
where $\bar{P}_{j} = \dfrac{P_{j}}{P_{k}}$ and $\bar{\alpha}_{j} = \dfrac{\alpha_{j}}{\alpha_{k}}$, i.e., $\bar{P}_{k} = \bar{\alpha}_{k} = 1$.
\end{lemma}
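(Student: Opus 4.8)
The plan is to assemble $f_{X_{nk}}$ from the elementary nearest-neighbour statistics of the independent thinned processes $\phi_j^n$ and then encode the cell-association rule \eqref{cas} as a family of distance constraints. First I would recall that, by the thinning theorem, $\phi_1^n,\dots,\phi_K^n$ are \emph{independent} HPPPs with densities $p_{nj}\lambda_j$; consequently the distance $R_{nj}$ from the origin to the nearest file-$n$ cacher of tier $j$ obeys the Poisson void law $\mathbb{P}(R_{nj}>r)=\exp(-\pi p_{nj}\lambda_j r^2)$, with density $f_{R_{nj}}(r)=2\pi p_{nj}\lambda_j r\,\exp(-\pi p_{nj}\lambda_j r^2)$. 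Since all BSs of a given tier share the same power and pathloss exponent, the strongest tier-$j$ candidate is always the nearest one, so the association in \eqref{cas} reduces to comparing the $K$ quantities $P_j R_{nj}^{-\alpha_j}$.

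Next I would translate the event ``served by tier $k$ at distance $x$''. Conditioned on $R_{nk}=x$, this event is $\bigcap_{j\neq k}\{P_j R_{nj}^{-\alpha_j}\le P_k x^{-\alpha_k}\}$ (ties occur with probability zero). Solving $P_j R_{nj}^{-\alpha_j}\le P_k x^{-\alpha_k}$ for $R_{nj}$ gives the \emph{lower} bound $R_{nj}\ge (P_j/P_k)^{1/\alpha_j}x^{\alpha_k/\alpha_j}=\bar P_j^{1/\alpha_j}x^{1/\bar\alpha_j}$, and hence, by the void probability above, $\mathbb{P}\big(P_j R_{nj}^{-\alpha_j}\le P_k x^{-\alpha_k}\big)=\exp\!\big(-\pi p_{nj}\lambda_j\bar P_j^{2/\alpha_j}x^{2/\bar\alpha_j}\big)$. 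Using independence of the $\phi_j^n$ across tiers, the density-like contribution of the joint event $\{\text{served by tier }k\}\cap\{R_{nk}\in\mathrm{d}x\}$ factors as $f_{R_{nk}}(x)\prod_{j\neq k}\exp\!\big(-\pi p_{nj}\lambda_j\bar P_j^{2/\alpha_j}x^{2/\bar\alpha_j}\big)$. Observing that $\bar P_k=\bar\alpha_k=1$ makes the $j=k$ term equal to $\exp(-\pi p_{nk}\lambda_k x^2)$ — exactly the exponential already present in $f_{R_{nk}}(x)$ — I can extend the product to run over all $j=1,\dots,K$, which collapses the expression to $2\pi p_{nk}\lambda_k x\,\exp\!\big[-\pi\sum_{j=1}^{K}p_{nj}\lambda_j\bar P_j^{2/\alpha_j}x^{2/\bar\alpha_j}\big]$. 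Dividing by $A_k=\mathbb{P}(\text{served by tier }k)$ to pass from this joint quantity to the conditional density $f_{X_{nk}}$ yields the claimed formula; integrating it over $x\in(0,\infty)$ reproduces the defining integral for $A_k$, confirming that $f_{X_{nk}}$ normalizes to one.

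The only genuine pitfall is the algebra in the second step: the inequality $P_j R_{nj}^{-\alpha_j}\le P_k x^{-\alpha_k}$ must be inverted correctly — taking the reciprocal flips it — so that it becomes a lower bound on $R_{nj}$, and the exponents must then be tracked carefully to land on $\bar P_j^{2/\alpha_j}$ and $x^{2/\bar\alpha_j}$ rather than, say, $\bar P_j^{2/\alpha_k}$ or $x^{2\bar\alpha_j}$. Everything else is a routine combination of Slivnyak's theorem, independent thinning, and the Poisson void probability.
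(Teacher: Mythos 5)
Your derivation is correct and is, in substance, the same argument the paper uses: the paper simply notes that $\{X_{nk}>x_{nk}\}$ is $\{R_{nk}>x_{nk}\}$ conditioned on tier-$k$ association and then outsources the computation to Lemma 3 of the cited Jo et al.\ reference together with the void probability of the thinned HPPP, whereas you carry out exactly that computation explicitly (void law for each $\phi_j^n$, inversion of $P_j R_{nj}^{-\alpha_j}\le P_k x^{-\alpha_k}$ into the lower bound $\bar P_j^{1/\alpha_j}x^{1/\bar\alpha_j}$, independence across tiers, and normalization by $A_k$). All exponents and the absorption of the $j=k$ factor into the sum check out, so your self-contained version is a valid, slightly more complete rendering of the paper's proof.
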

\begin{proof}
Note that the event $\{X_{nk} > x_{nk}\}$ is equivalent to the event $\{R_{nk} > x_{nk}\}$ conditioned on the association of the typical user with a $k$-tier BS, thus we have
\begin{align}
\nonumber \mathbb{P}\left(X_{nk} > x_{nk}\right) &= \mathbb{P}\left(R_{nk} > x_{nk}  \left.\right| \text{Association index = k}\right).
\end{align}
The result follows from Lemma $3$ in \cite{jo2012heterogeneous} by observing that the distribution of $R_{nk}$ can be obtained from its null probability as $f_{R_{nk}}(r_{nk}) = 2\pi p_{nk} \lambda_{k} r_{nk} \text{exp}\left(-\pi p_{nk} \lambda_{k} r_{nk}^{2}\right)$ \cite{haenggi2012stochastic}.
\end{proof}
Now, our objective is to derive the Laplace transform of the interference power $\mathcal{L}_{I_{j}}\left(\dfrac{x_{nk}^{\alpha_{k}}\tau}{P_{k}}\right)$ in order to evaluate the coverage probability (\ref{cov.nk}). We first derive the Laplace transform of the serving tier's interference power, i.e., $\mathcal{L}_{I_{k}}$, and then we derive the Laplace transform of the other tiers' interference powers. Since the serving BS in the $k$-th tier located at distance $x_{nk}$ from the typical user, none of the $k$-th tier's BSs which are located inside the circle of radius $x_{nk}$ cache file $n$. Therefore, the $k$-th tier's BSs located inside and outside the circle of radius $x_{nk}$ constitute two HPPPs with densities $q_{nk}\lambda_{k}$ and $\lambda_{k}$, respectively, and with activity probability $a_{k}$, where $q_{nk} = 1 - p_{nk}$. $\mathcal{L}_{I_{k}}$ can be determined by evaluating the Laplace transform of the $k$-th tier's interference power from outside and inside the circle of radius $x_{nk}$ separately.
\begin{lemma}\label{lemma2}
The Laplace transform of the $k$-th tier's interference power from outside the circle of radius $x_{nk}$ is
\small
\begin{align}
\mathcal{L}_{I_{k}}^{\text{out}}\left(\dfrac{x_{nk}^{\alpha_{k}}\tau}{P_{k}}\right) = \text{exp}\left[-\pi \lambda_{k} a_{k} \rho_{1}(k) x_{nk}^{2}\right],
\end{align}
\normalsize
where $\rho_{1}(m) = \tau^{2/\alpha_{m}} \int_{\tau^{-2/\alpha_{m}}}^{\infty} \dfrac{1}{1 + u^{\alpha_{m}/2}}\; \text{d}u$. 
\end{lemma}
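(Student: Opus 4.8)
The plan is to evaluate $\mathcal{L}_{I_{k}}^{\text{out}}$ via the probability generating functional (PGFL) of the HPPP and then reduce the resulting radial integral to the definition of $\rho_{1}(\cdot)$ by a change of variables. Write $s = x_{nk}^{\alpha_{k}}\tau / P_{k}$ and let $I_{k}^{\text{out}} = \sum_{i \in \phi_{k},\; \|Y_{ki}\| > x_{nk}} t_{ki} P_{k} h_{ki} \|Y_{ki}\|^{-\alpha_{k}}$. First I would argue that, conditioned on the association with the $k$-th tier and on the serving distance $x_{nk}$, the only constraint imposed is that the thinned process $\phi_{k}^{n}$ of tier-$k$ BSs caching file $n$ has no point inside the ball of radius $x_{nk}$ about the origin; the tier-$k$ BSs lying outside this ball are unaffected and still form a HPPP of the original density $\lambda_{k}$, independent of the fading and activity marks $\{h_{ki}\}$, $\{t_{ki}\}$. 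Taking the expectation over these independent marks inside the product over points and invoking the PGFL gives
\begin{align}\label{pgfl.step}
\mathcal{L}_{I_{k}}^{\text{out}}(s) = \text{exp}\!\left(-\lambda_{k} \int_{\|y\| > x_{nk}}\! \left(1 - \mathbb{E}_{t_{k},h}\!\left[\text{exp}\!\left(-s\, t_{k} P_{k} h \|y\|^{-\alpha_{k}}\right)\right]\right) \text{d}y\right).
\end{align}

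Next I would evaluate the per-point expectation. Since $t_{k} = 1$ with probability $a_{k}$ and $t_{k} = 0$ otherwise, and $h \sim \text{exp}(1)$ so that $\mathbb{E}_{h}[\text{exp}(-c h)] = (1+c)^{-1}$, one obtains $\mathbb{E}_{t_{k},h}[\text{exp}(-s t_{k} P_{k} h \|y\|^{-\alpha_{k}})] = 1 - a_{k} + a_{k}(1 + s P_{k} \|y\|^{-\alpha_{k}})^{-1}$, hence the integrand in (\ref{pgfl.step}) equals $a_{k}\, s P_{k} \|y\|^{-\alpha_{k}} / (1 + s P_{k} \|y\|^{-\alpha_{k}})$. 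Passing to polar coordinates, $\int_{\|y\| > x_{nk}}(\cdot)\,\text{d}y = 2\pi \int_{x_{nk}}^{\infty}(\cdot)\, r\, \text{d}r$, so the exponent in (\ref{pgfl.step}) becomes $-2\pi \lambda_{k} a_{k} \int_{x_{nk}}^{\infty} r\,(1 + r^{\alpha_{k}}/(s P_{k}))^{-1}\,\text{d}r$.

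Finally I would substitute $s P_{k} = x_{nk}^{\alpha_{k}} \tau$ and change variables via $u = r^{2} x_{nk}^{-2} \tau^{-2/\alpha_{k}}$, under which $r^{\alpha_{k}}/(sP_{k}) = u^{\alpha_{k}/2}$, the lower limit $r = x_{nk}$ maps to $u = \tau^{-2/\alpha_{k}}$, and $r\,\text{d}r = \tfrac12 x_{nk}^{2} \tau^{2/\alpha_{k}}\,\text{d}u$. This turns the exponent into $-\pi \lambda_{k} a_{k} x_{nk}^{2}\, \tau^{2/\alpha_{k}} \int_{\tau^{-2/\alpha_{k}}}^{\infty} (1 + u^{\alpha_{k}/2})^{-1}\,\text{d}u = -\pi \lambda_{k} a_{k} \rho_{1}(k) x_{nk}^{2}$, which is the claimed form. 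No single step here is deep; the only point requiring care is the conditioning argument in the first step --- one must be sure that fixing the contact distance to the nearest file-$n$ cache does not perturb the law of the tier-$k$ interferers outside the ball, so that the PGFL of a homogeneous PPP of density $\lambda_{k}$ on $\{\|y\| > x_{nk}\}$ may legitimately be applied. The remainder is the standard Laplace-transform-of-an-exponential computation together with the substitution that reproduces the definition of $\rho_{1}$.
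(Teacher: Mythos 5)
Your proposal is correct and follows essentially the same route as the paper's proof: both reduce to the PGFL of the tier-$k$ HPPP of density $\lambda_{k}$ outside the ball of radius $x_{nk}$, average out the Rayleigh fading and the Bernoulli activity marks to get the per-point factor $a_{k}/(1+(r/x_{nk})^{\alpha_{k}}/\tau)$, and apply the substitution $u=(r/x_{nk})^{2}\tau^{-2/\alpha_{k}}$ to recover $\rho_{1}(k)$. The only cosmetic difference is that you invoke the marked-PPP PGFL with the mark expectation taken first, whereas the paper conditions on $\phi_{k}$, averages over $h_{ki}$ and $t_{ki}$ inside the product over points, and then applies the PGFL; your explicit remark that conditioning on the contact distance only removes file-$n$ caching BSs inside the ball and leaves the outside process an HPPP of density $\lambda_{k}$ is exactly the justification the paper uses implicitly.
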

\begin{proof}
Following the same approach used in the proof of Theorem 1 in \cite{andrews2011tractable}, the result can be obtained as follows
\small
\begin{align}
\nonumber \mathcal{L}_{I_{k}}^{\text{out}} &= \mathbb{E}_{\phi_{k},t_{ki},h_{ki}}\left[\prod_{i \in \phi_{k}\setminus B(0,x_{nk})}{\text{exp}\left(-x_{nk}^{\alpha_{k}} \tau t_{ki} h_{ki} \parallel Y_{ki} \parallel^{-\alpha_{k}}\right)}\right] \\
\nonumber &\overset{(a)}{=}\mathbb{E}_{\phi_{k},t_{ki}}\left[\prod_{i \in \phi_{k}\setminus B(0,x_{nk})}{\dfrac{1}{1 + x_{nk}^{\alpha_{k}} \tau t_{ki} \parallel Y_{ki} \parallel^{-\alpha_{k}}}}\right] \\
\label{Iout.mid} &\overset{(b)}{=}\mathbb{E}_{\phi_{k}}\left[\prod_{i \in \phi_{k}\setminus B(0,x_{nk})}{\dfrac{a_{k}}{1 + x_{nk}^{\alpha_{k}} \tau \parallel Y_{ki} \parallel^{-\alpha_{k}}} + (1 - a_{k})}\right]\\
\nonumber &\overset{(c)}{=}\text{exp}\left[- 2\pi \lambda_{k} a_{k} \int_{x_{nk}}^{\infty} \dfrac{\tau}{\tau + (\dfrac{y}{x_{nk}})^{\alpha_{k}}}\; y\text{d}y\right],
\end{align}
\normalsize
where $B(0,x_{nk})$ denotes the set of $k$-th tier's BSs located inside a circle centered at the origin and with radius $x_{nk}$, (a) follows from the independence of the random variables $h_{ki}$ along with the fact that $h_{ki} \sim \text{exp}(1)$, (b) follows from the fact that the random variables $t_{ki}$ are independent and (c) follows from the probability generating functional (PGFL) of the PPP along with simple algebraic manipulations. By using the change of variables $u = \left(y/x_{nk}\right)^{2} \tau^{-2/\alpha_{k}}$, the result can be directly obtained.
\end{proof}
Following the same analysis of Lemma \ref{lemma2}, the Laplace transform of the $k$-th tier's interference power from inside the circle of radius $x_{nk}$ is given by the following Lemma. The proof of Lemma \ref{lemma3} is omitted for brevity.
\begin{lemma}\label{lemma3}
The Laplace transform of the $k$-th tier's interference power from inside the circle of radius $x_{nk}$ is
\begin{align}
\mathcal{L}_{I_{k}}^{\text{ins}}\left(\dfrac{x_{nk}^{\alpha_{k}}\tau}{P_{k}}\right) = \text{exp}\left[-\pi \lambda_{k} a_{k} q_{nk} \rho_{2}(k) x_{nk}^{2}\right],
\end{align}
where $\rho_{2}(m) = \tau^{2/\alpha_{m}} \int_{0}^{\tau^{-2/\alpha_{m}}} \dfrac{1}{1 + u^{\alpha_{m}/2}}\; \text{d}u$. 
\end{lemma}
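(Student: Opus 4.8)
The plan is to follow verbatim the derivation of Lemma \ref{lemma2}, changing only the point process over which the expectation is taken and the region over which we integrate. First I would record the key structural observation already stated in the text preceding Lemma \ref{lemma2}: conditioned on the serving BS being the closest member of $\phi_{k}^{n}$ and located at distance $x_{nk}$, the $k$-th tier's BSs lying inside $B(0,x_{nk})$ are precisely those that do \emph{not} cache file $n$, so by the thinning theorem they form a HPPP of intensity $q_{nk}\lambda_{k}$ on the disk (independently of the conditioning), each active with probability $a_{k}$. With this, $\mathcal{L}_{I_{k}}^{\text{ins}}$ is the expectation of $\prod_{i \in \phi_{k}\cap B(0,x_{nk})}\text{exp}\left(-x_{nk}^{\alpha_{k}}\tau t_{ki}h_{ki}\parallel Y_{ki}\parallel^{-\alpha_{k}}\right)$.

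Next I would take expectations in the same order as in Lemma \ref{lemma2}: averaging over the i.i.d.\ $h_{ki}\sim\text{exp}(1)$ replaces each factor by $\dfrac{1}{1+x_{nk}^{\alpha_{k}}\tau\parallel Y_{ki}\parallel^{-\alpha_{k}}}$; averaging over the independent Bernoulli marks $t_{ki}$ replaces it by $\dfrac{a_{k}}{1+x_{nk}^{\alpha_{k}}\tau\parallel Y_{ki}\parallel^{-\alpha_{k}}}+(1-a_{k})$; and applying the PGFL of a HPPP of intensity $q_{nk}\lambda_{k}$ over the bounded region $\{\parallel y\parallel\le x_{nk}\}$ yields
\begin{align}
\nonumber \mathcal{L}_{I_{k}}^{\text{ins}}\left(\dfrac{x_{nk}^{\alpha_{k}}\tau}{P_{k}}\right)=\text{exp}\left[-2\pi q_{nk}\lambda_{k}a_{k}\int_{0}^{x_{nk}}\dfrac{\tau}{\tau+(y/x_{nk})^{\alpha_{k}}}\,y\,\text{d}y\right].
\end{align}
The only differences from (\ref{Iout.mid}) are that the radial integral runs over $[0,x_{nk}]$ instead of $[x_{nk},\infty)$, and that the density carries the extra factor $q_{nk}$ from the thinning.

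Finally, the substitution $u=(y/x_{nk})^{2}\tau^{-2/\alpha_{k}}$ used at the end of Lemma \ref{lemma2} transforms this integral: the limits become $u=0$ and $u=\tau^{-2/\alpha_{k}}$, and the integrand collapses to $\dfrac{\tau^{2/\alpha_{k}}}{1+u^{\alpha_{k}/2}}$, so the exponent becomes $-\pi\lambda_{k}a_{k}q_{nk}\rho_{2}(k)x_{nk}^{2}$ with $\rho_{2}$ as defined in the statement, which is the claim.

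I do not expect a genuine obstacle; the computation is routine once the machinery of Lemma \ref{lemma2} is in place. The single point that warrants care — and presumably the reason the result is split off into its own lemma rather than folded into Lemma \ref{lemma2} — is the intensity reduction to $q_{nk}\lambda_{k}$ inside the disk: one must invoke the independence inherent in probabilistic caching to argue that, although conditioning on $x_{nk}$ forbids any file-$n$ BS inside $B(0,x_{nk})$, the BSs caching other files are unaffected and still form a HPPP of intensity $(1-p_{nk})\lambda_{k}$ there. Everything downstream of that observation is identical, mutatis mutandis, to the proof of Lemma \ref{lemma2}.
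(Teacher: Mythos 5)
Your proposal is correct and is exactly the argument the paper intends: the paper omits the proof of Lemma \ref{lemma3} "for brevity," stating only that it follows the same analysis as Lemma \ref{lemma2}, which is what you carry out — thinned intensity $q_{nk}\lambda_{k}$ inside the disk, expectation over fading and activity marks, PGFL over $[0,x_{nk}]$, and the substitution $u=(y/x_{nk})^{2}\tau^{-2/\alpha_{k}}$ giving the exponent $-\pi\lambda_{k}a_{k}q_{nk}\rho_{2}(k)x_{nk}^{2}$. The computation checks out, including the change of limits to $[0,\tau^{-2/\alpha_{k}}]$.
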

Exploiting the fact that the $k$-th tier's interference power from outside and inside the circle of radius $x_{nk}$ are independent random variables, the Laplace transform of the serving tier's interference power, $\mathcal{L}_{I_{k}}$, can be expressed as

\small
\begin{align}\label{lap.serv}
\nonumber \mathcal{L}_{I_{k}}\left(\dfrac{x_{nk}^{\alpha_{k}}\tau}{P_{k}}\right) &= \mathcal{L}_{I_{k}}^{\text{out}}\left(\dfrac{x_{nk}^{\alpha_{k}}\tau}{P_{k}}\right) \times \mathcal{L}_{I_{k}}^{\text{ins}}\left(\dfrac{x_{nk}^{\alpha_{k}}\tau}{P_{k}}\right) \\
 &= \text{exp}\left[-\pi \lambda_{k} a_{k} x_{nk}^{2}\left(\rho_{1}(k) + q_{nk} \rho_{2}(k)\right)\right].
\end{align}
\normalsize
The typical user is associated with the $k$-th tier to obtain file $n$ if $R_{nj} > \bar{P}_{j}^{1/\alpha_{j}} x_{nk}^{1/\bar{\alpha}_{j}},\; \forall j \neq k$. Let $x_{j} = \bar{P}_{j}^{1/\alpha_{j}} x_{nk}^{1/\bar{\alpha}_{j}}$. Given that the typical user with a BS at $k$-th tier, the distance from the typical user to the closest BS that caches file $n$ in the $j$-th tier, $\forall j \neq k$, is at least $x_{j}$. Hence, the $j$-th tier's BSs located inside and outside the circle of radius $x_{j}$ constitute two HPPPs with densities $q_{nj} \lambda_{j}$ and $\lambda_{j}$, respectively, and with activity probability $a_{j}$, where $q_{nj} = 1 - p_{nj}$.
\begin{lemma}\label{lemma4}
Given that the typical user associates with the $k$-th tier, the Laplace transform of the $j$-th tier's interference power from outside the circle of radius $x_{j}$ is
\small
\begin{align}
\nonumber \mathcal{L}_{I_{j}}^{\text{out}}\left(\dfrac{x_{nk}^{\alpha_{k}}\tau}{P_{k}}\right) = \text{exp}\left[-\pi \lambda_{j} a_{j} \bar{P}_{j}^{2/\alpha_{j}} \rho_{1}(j) x_{nk}^{2/\bar{\alpha}_{j}} \right],\; \forall j \neq k.
\end{align}
\end{lemma}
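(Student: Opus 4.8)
The plan is to reproduce the argument of the proof of Lemma~\ref{lemma2} almost verbatim, now applied to the $j$-th tier with the association-induced exclusion disk $B(0,x_{j})$ in place of $B(0,x_{nk})$, and only at the very end to substitute $x_{j}=\bar{P}_{j}^{1/\alpha_{j}}x_{nk}^{1/\bar{\alpha}_{j}}$. Write $s=\dfrac{x_{nk}^{\alpha_{k}}\tau}{P_{k}}$ for the Laplace-transform argument and $I_{j}^{\text{out}}=\sum_{i\in\phi_{j}\setminus B(0,x_{j})}t_{ji}P_{j}h_{ji}\parallel Y_{ji}\parallel^{-\alpha_{j}}$, so that $\mathcal{L}_{I_{j}^{\text{out}}}(s)=\mathbb{E}\!\left[\prod_{i\in\phi_{j}\setminus B(0,x_{j})}\text{exp}\!\left(-s\,t_{ji}P_{j}h_{ji}\parallel Y_{ji}\parallel^{-\alpha_{j}}\right)\right]$.

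First I would take the expectation over the i.i.d.\ fading gains $h_{ji}\sim\text{exp}(1)$, which replaces each factor by $\dfrac{1}{1+s\,t_{ji}P_{j}\parallel Y_{ji}\parallel^{-\alpha_{j}}}$; then average over the independent activity indicators $t_{ji}$ ($1$ with probability $a_{j}$, $0$ otherwise), giving the factor $\dfrac{a_{j}}{1+sP_{j}\parallel Y_{ji}\parallel^{-\alpha_{j}}}+(1-a_{j})$, exactly as in steps (a)--(b) of Lemma~\ref{lemma2}. Applying the PGFL of the HPPP $\phi_{j}$ (density $\lambda_{j}$) restricted to the complement of $B(0,x_{j})$ and passing to polar coordinates then yields $\mathcal{L}_{I_{j}^{\text{out}}}(s)=\text{exp}\!\left[-2\pi\lambda_{j}a_{j}\displaystyle\int_{x_{j}}^{\infty}\dfrac{sP_{j}\,r^{-\alpha_{j}}}{1+sP_{j}\,r^{-\alpha_{j}}}\,r\,\text{d}r\right]$.

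The key algebraic step is to notice that $x_{j}^{\alpha_{j}}=\bar{P}_{j}\,x_{nk}^{\alpha_{k}}$, hence $sP_{j}=\bar{P}_{j}\tau\,x_{nk}^{\alpha_{k}}=\tau\,x_{j}^{\alpha_{j}}$, so the integrand collapses to $\dfrac{\tau}{\tau+(r/x_{j})^{\alpha_{j}}}$ and the exponent becomes $-2\pi\lambda_{j}a_{j}\int_{x_{j}}^{\infty}\dfrac{\tau}{\tau+(r/x_{j})^{\alpha_{j}}}\,r\,\text{d}r$ --- this is literally the last line of the proof of Lemma~\ref{lemma2} with $(x_{nk},\alpha_{k},\lambda_{k},a_{k})$ replaced by $(x_{j},\alpha_{j},\lambda_{j},a_{j})$. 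Carrying out the same change of variables $u=(r/x_{j})^{2}\tau^{-2/\alpha_{j}}$ gives $\text{exp}\!\left[-\pi\lambda_{j}a_{j}\rho_{1}(j)\,x_{j}^{2}\right]$, and finally substituting $x_{j}^{2}=\bar{P}_{j}^{2/\alpha_{j}}x_{nk}^{2/\bar{\alpha}_{j}}$ produces the claimed formula.

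I expect the only genuine obstacle to be the bookkeeping that links the Laplace-transform argument $s$, the power ratio $\bar{P}_{j}$, and the exclusion radius $x_{j}$ dictated by the cell-association event; once the identity $sP_{j}=\tau\,x_{j}^{\alpha_{j}}$ is in hand, everything reduces to a direct reuse of Lemma~\ref{lemma2}. I would also remark that this lemma deliberately addresses only the exterior of $B(0,x_{j})$; the interference contribution of the $j$-th tier's BSs inside $B(0,x_{j})$, which carries the additional factor $q_{nj}\rho_{2}(j)$, is treated separately in the same spirit as Lemma~\ref{lemma3}.
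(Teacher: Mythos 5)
Your proposal is correct and follows exactly the route the paper intends: the paper's proof of Lemma~\ref{lemma4} is simply a pointer back to Lemma~\ref{lemma2} with $x_{nk}$ replaced by $x_{j}$ and $P_{j}$ in place of $P_{k}$, and your write-up fills in precisely that substitution, including the key identity $sP_{j}=\tau x_{j}^{\alpha_{j}}$ (equivalently $x_{j}^{\alpha_{j}}=\bar{P}_{j}x_{nk}^{\alpha_{k}}$) that makes the integrand collapse to the Lemma~\ref{lemma2} form. The fading/activity averaging, PGFL step, change of variables, and final substitution $x_{j}^{2}=\bar{P}_{j}^{2/\alpha_{j}}x_{nk}^{2/\bar{\alpha}_{j}}$ are all carried out correctly.
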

\normalsize
\begin{proof}
The result can be obtained in a similar way as in Lemma \ref{lemma2} by replacing the radius of the circle $x_{nk}$ with $x_{j}$ and considering the transmission power of the $j$-th tier $P_{j}$.
\end{proof}
Following the same analysis of Lemma \ref{lemma4}, the Laplace transform of the $j$-th tier's interference power from inside the circle of radius $x_{j}$ is given by the following Lemma.
\begin{lemma}\label{lemma5}
Conditioned on the association of the typical user with the $k$-th tier, the Laplace transform of the $j$-th tier's interference power from inside the circle of radius $x_{j}$ is
\begin{align}
\nonumber \mathcal{L}_{I_{j}}^{\text{ins}}\left(\dfrac{x_{nk}^{\alpha_{k}}\tau}{P_{k}}\right) = \text{exp}\left[-\pi \lambda_{j} q_{nj} a_{j} \bar{P}_{j}^{2/\alpha_{j}} \rho_{2}(j) x_{nk}^{2/\bar{\alpha}_{j}} \right],\; \forall j \neq k.
\end{align}
\end{lemma}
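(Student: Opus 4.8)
\emph{Proof plan.} The plan is to mirror the derivation of Lemma~\ref{lemma3} (the ``inside'' Laplace transform of the serving tier), carrying along the two modifications already used in Lemma~\ref{lemma4}: the disk radius becomes $x_{j}=\bar{P}_{j}^{1/\alpha_{j}}x_{nk}^{1/\bar{\alpha}_{j}}$ and the transmit power becomes $P_{j}$. First I would recall the setup stated just before the lemma: conditioned on association with the $k$-th tier, the $j$-th tier's BSs lying inside $B(0,x_{j})$ form a HPPP of density $q_{nj}\lambda_{j}$ (none of them caches file $n$, since the nearest tier-$j$ BS caching file $n$ is at distance at least $x_{j}$) and each is active independently with probability $a_{j}$.

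Next I would start from the definition of $\mathcal{L}_{I_{j}}^{\text{ins}}$ evaluated at $s=x_{nk}^{\alpha_{k}}\tau/P_{k}$, i.e. $\mathbb{E}\big[\prod_{i}\exp(-s\,t_{ji}P_{j}h_{ji}\parallel Y_{ji}\parallel^{-\alpha_{j}})\big]$ over the BSs in $B(0,x_{j})$. Averaging over the independent unit-mean exponential gains $h_{ji}$ turns each factor into $(1+s P_{j}\parallel Y_{ji}\parallel^{-\alpha_{j}})^{-1}$ on the event that the BS is active; averaging over the independent $\mathrm{Bernoulli}(a_{j})$ indicators $t_{ji}$ then produces $\tfrac{a_{j}}{1+sP_{j}\parallel Y_{ji}\parallel^{-\alpha_{j}}}+(1-a_{j})$, exactly as in steps (a)--(b) of Lemma~\ref{lemma2}. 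Applying the PGFL of the PPP, now restricted to the disk of radius $x_{j}$ and with density $q_{nj}\lambda_{j}$, and simplifying, I obtain $\exp\!\big(-2\pi q_{nj}\lambda_{j}a_{j}\int_{0}^{x_{j}}\tfrac{sP_{j}}{y^{\alpha_{j}}+sP_{j}}\,y\,\text{d}y\big)$.

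The remaining work is the change of variables. Substituting $s=x_{nk}^{\alpha_{k}}\tau/P_{k}$ and using $x_{j}^{\alpha_{j}}=\bar{P}_{j}x_{nk}^{\alpha_{k}}$ (immediate from the definition of $x_{j}$, since $\alpha_{j}/\bar{\alpha}_{j}=\alpha_{k}$) rewrites the integrand as $\tau/\big((y/x_{j})^{\alpha_{j}}+\tau\big)$; then $u=(y/x_{j})^{2}\tau^{-2/\alpha_{j}}$ — the same substitution as in Lemma~\ref{lemma2}, with $x_{nk}$ replaced by $x_{j}$ — collapses the integral to $\tfrac12 x_{j}^{2}\tau^{2/\alpha_{j}}\int_{0}^{\tau^{-2/\alpha_{j}}}(1+u^{\alpha_{j}/2})^{-1}\text{d}u=\tfrac12 x_{j}^{2}\rho_{2}(j)$. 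Plugging this back in and using $x_{j}^{2}=\bar{P}_{j}^{2/\alpha_{j}}x_{nk}^{2/\bar{\alpha}_{j}}$ gives $\mathcal{L}_{I_{j}}^{\text{ins}}=\exp\!\big(-\pi\lambda_{j}q_{nj}a_{j}\bar{P}_{j}^{2/\alpha_{j}}\rho_{2}(j)x_{nk}^{2/\bar{\alpha}_{j}}\big)$, as claimed.

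Since every ingredient has already appeared (the PGFL computation in Lemma~\ref{lemma2}, the $\rho_{2}$ integral in Lemma~\ref{lemma3}, and the radius/power replacement in Lemma~\ref{lemma4}), I do not expect a genuine obstacle. The only points demanding a bit of care are bookkeeping the identity $x_{j}^{\alpha_{j}}=\bar{P}_{j}x_{nk}^{\alpha_{k}}$ so that the upper integration limit comes out as $\tau^{-2/\alpha_{j}}$ (and not something still involving $x_{nk}$), and keeping the thinning factor $q_{nj}$ for the inside region distinct from the factor $1$ used for the outside region in Lemma~\ref{lemma4}.
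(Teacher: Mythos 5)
Your proposal is correct and follows exactly the route the paper intends: Lemma~\ref{lemma5} is stated without proof as "the same analysis as Lemma~\ref{lemma4}," and you faithfully carry out that analysis (PGFL over the thinned density $q_{nj}\lambda_{j}$ inside the disk of radius $x_{j}$, power $P_{j}$, and the substitution $u=(y/x_{j})^{2}\tau^{-2/\alpha_{j}}$ using $x_{j}^{\alpha_{j}}=\bar{P}_{j}x_{nk}^{\alpha_{k}}$), arriving at the stated expression with $\rho_{2}(j)$.
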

Using Lemma \ref{lemma4} and \ref{lemma5} along with exploiting the fact that the $j$-th tier's interference power from outside and inside the circle of radius $x_{j}$ are independent random variables, the Laplace transform of the $j$-th tier's interference power can be expressed as
\begin{align}\label{lap.other}
\nonumber \mathcal{L}_{I_{j}}\left(\dfrac{x_{nk}^{\alpha_{k}}\tau}{P_{k}}\right) &= \mathcal{L}_{I_{j}}^{\text{out}}\left(\dfrac{x_{nk}^{\alpha_{k}}\tau}{P_{k}}\right) \times \mathcal{L}_{I_{j}}^{\text{ins}}\left(\dfrac{x_{nk}^{\alpha_{k}}\tau}{P_{k}}\right) \\
\nonumber &= \text{exp}\left[-\pi \lambda_{j} a_{j} \bar{P}_{j}^{2/\alpha_{j}}  x_{nk}^{2/\bar{\alpha}_{j}} \left(\rho_{1}(j) + q_{nj} \rho_{2}(j)\right)\right].
\end{align}
\begin{theorem}\label{theorem1}
The coverage probability of obtaining file $n$ when the typical user is associated with the $k$-th tier is
\small
\begin{dmath*}\label{cnk.final}
C_{nk}=\int_{0}^{\infty} \dfrac{2 \pi p_{nk} \lambda_{k} x_{nk}}{A_{k}} \text{exp}\left[-\pi \sum_{j=1}^{K}{\lambda_{j} \bar{P}_{j}^{2/\alpha_{j}} x_{nk}^{2/ \bar{\alpha}_{j}}\left( p_{nj} + a_{j} \left( \rho_{1}(j) + q_{nj} \rho_{2}(j)\right)\right)}\right]\;\text{d}x_{nk}.
\end{dmath*}
\end{theorem}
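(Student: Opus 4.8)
The plan is to assemble the ingredients already established. Equation~(\ref{cov.nk}) expresses $C_{nk}$ as the integral over the serving distance $x_{nk}$ of the product $\prod_{j=1}^{K}\mathcal{L}_{I_j}(x_{nk}^{\alpha_k}\tau/P_k)$ times the density $f_{X_{nk}}(x_{nk})$. Lemma~\ref{lemma1} gives the density, equation~(\ref{lap.serv}) gives the Laplace transform of the serving tier $k$ (built from Lemmas~\ref{lemma2} and~\ref{lemma3} via the independence of the interference from inside and outside the disk of radius $x_{nk}$), and equation~(\ref{lap.other}) gives the Laplace transform of each interfering tier $j\neq k$ (built analogously from Lemmas~\ref{lemma4} and~\ref{lemma5}). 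So the only remaining work is to multiply these three exponential expressions and collect the exponents.

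First I would rewrite the serving-tier Laplace transform in~(\ref{lap.serv}) in the same generic form as~(\ref{lap.other}): since $\bar P_k=\bar\alpha_k=1$ by definition, $\exp[-\pi\lambda_k a_k x_{nk}^{2}(\rho_1(k)+q_{nk}\rho_2(k))]$ equals $\exp[-\pi\lambda_k a_k \bar P_k^{2/\alpha_k} x_{nk}^{2/\bar\alpha_k}(\rho_1(k)+q_{nk}\rho_2(k))]$. This lets me write the product over all tiers as a single exponential, $\prod_{j=1}^{K}\mathcal{L}_{I_j}=\exp[-\pi\sum_{j=1}^{K}\lambda_j a_j \bar P_j^{2/\alpha_j} x_{nk}^{2/\bar\alpha_j}(\rho_1(j)+q_{nj}\rho_2(j))]$. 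Next I would multiply by the exponential factor of the density from Lemma~\ref{lemma1}, namely $\exp[-\pi\sum_{j=1}^{K}p_{nj}\lambda_j \bar P_j^{2/\alpha_j} x_{nk}^{2/\bar\alpha_j}]$, adding the two exponents term by term over $j$ to get the combined exponent $-\pi\sum_{j=1}^{K}\lambda_j \bar P_j^{2/\alpha_j} x_{nk}^{2/\bar\alpha_j}\big(p_{nj}+a_j(\rho_1(j)+q_{nj}\rho_2(j))\big)$. The prefactor $2\pi p_{nk}\lambda_k x_{nk}/A_k$ passes through unchanged, and substituting back into~(\ref{cov.nk}) produces exactly the claimed integral.

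There is essentially no obstacle here, since all the analytic content lies in the preceding lemmas. The one point needing care is the bookkeeping that merges the serving tier and the interfering tiers into a single sum over $\mathcal{K}$: I must check that the $j=k$ term genuinely reduces to the generic form once $\bar P_k=\bar\alpha_k=1$ is invoked, and that the inside/outside-disk decomposition has been counted exactly once for every tier (it has — via~(\ref{lap.serv}) for $j=k$ and via~(\ref{lap.other}) for $j\neq k$). After that verification, combining the exponentials and folding them under the integral is routine algebra.
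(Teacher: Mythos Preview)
Your proposal is correct and follows exactly the paper's approach: the theorem is obtained by substituting the density from Lemma~\ref{lemma1} and the Laplace transforms~(\ref{lap.serv}) and~(\ref{lap.other}) into the integral representation~(\ref{cov.nk}), then merging the $j=k$ and $j\neq k$ terms into a single sum using $\bar P_k=\bar\alpha_k=1$. The paper does not spell out a separate proof block for Theorem~\ref{theorem1}, precisely because the preceding lemmas and displayed equations already constitute the derivation you describe.
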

\normalsize  
Based on Theorem \ref{theorem1}, the coverage probability of obtaining file $n$, $C_{n}$, is determined by plugging $C_{nk}$ into (\ref{total.cov}). Furthermore, the coverage probability of obtaining any file $m$ in the content library can be determined by replacing the caching probabilities of file $n$ across different tiers with those of file $m$ in Theorem \ref{theorem1}. Next, for the special case of having $\alpha_{j} = \alpha,\;\forall j$, we provide the coverage probability of obtaining file $n$.
\begin{corollary}\label{corollary1}
The coverage probability of obtaining file $n$ when $\alpha_{j} = \alpha,\; \forall j$, i.e., $\bar{\alpha}_{j} = 1,\; \forall j$, is
\begin{equation}\label{caching_impact}
 C_{n} = \sum_{k=1}^{K}{\dfrac{p_{nk} \lambda_{k}}{\sum_{j=1}^{K}{\lambda_{j} \bar{P}_{j}^{2/\alpha} \left[p_{nj} + a_{j} \left(\rho_{1}(j) + q_{nj} \rho_{2}(j)\right)\right]}}}.
\end{equation}
\end{corollary}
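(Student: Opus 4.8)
The plan is to specialise the integral expression for $C_{nk}$ supplied by Theorem~\ref{theorem1} and then reassemble the tiers through (\ref{total.cov}). Imposing $\alpha_{j}=\alpha$ for every $j$ forces $\bar{\alpha}_{j}=1$ and $\bar{P}_{j}^{2/\alpha_{j}}=\bar{P}_{j}^{2/\alpha}$, so each power $x_{nk}^{2/\bar{\alpha}_{j}}$ inside the exponent of Theorem~\ref{theorem1} collapses to $x_{nk}^{2}$. The integrand in Theorem~\ref{theorem1} therefore becomes $x_{nk}\,e^{-G x_{nk}^{2}}$ up to the prefactor $2\pi p_{nk}\lambda_{k}/A_{k}$, where
\begin{equation*}
G=\pi\sum_{j=1}^{K}\lambda_{j}\bar{P}_{j}^{2/\alpha}\left[p_{nj}+a_{j}\left(\rho_{1}(j)+q_{nj}\rho_{2}(j)\right)\right]
\end{equation*}
is a strictly positive constant independent of $x_{nk}$.

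First I would carry out the elementary integral $\int_{0}^{\infty}x_{nk}\,e^{-G x_{nk}^{2}}\,\text{d}x_{nk}=1/(2G)$, which gives $C_{nk}=\pi p_{nk}\lambda_{k}/(A_{k}G)$, that is,
\begin{equation*}
C_{nk}=\frac{p_{nk}\lambda_{k}}{A_{k}\sum_{j=1}^{K}\lambda_{j}\bar{P}_{j}^{2/\alpha}\left[p_{nj}+a_{j}\left(\rho_{1}(j)+q_{nj}\rho_{2}(j)\right)\right]}.
\end{equation*}
Then I would substitute this into $C_{n}=\sum_{k=1}^{K}A_{k}C_{nk}$ from (\ref{total.cov}); the association probability $A_{k}$ cancels term by term, and the remaining sum over $k$ is exactly (\ref{caching_impact}).

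No genuine obstacle arises; the only subtlety worth stressing is the cancellation of $A_{k}$. The factor $1/A_{k}$ carried by $f_{X_{nk}}$ in Lemma~\ref{lemma1} is precisely what makes the weighted term $A_{k}C_{nk}$ independent of $A_{k}$, so the corollary never needs a closed form for the association probabilities. As a consistency check, putting $K=1$, $a_{1}=1$ and $p_{n1}=1$ (hence $q_{n1}=0$ and $\bar{P}_{1}=1$) reduces (\ref{caching_impact}) to $1/\left(1+\rho_{1}(1)\right)$, the classical single-tier SIR coverage probability of \cite{andrews2011tractable}; recovering this limit is the reason for stating the corollary.
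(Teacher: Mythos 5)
Your derivation is correct and is exactly the calculation the paper intends (the corollary is stated without an explicit proof): setting $\bar{\alpha}_{j}=1$ in Theorem~\ref{theorem1} reduces the exponent to $-Gx_{nk}^{2}$, the integral $\int_{0}^{\infty}2x_{nk}e^{-Gx_{nk}^{2}}\,\text{d}x_{nk}=1/G$ gives $C_{nk}=\pi p_{nk}\lambda_{k}/(A_{k}G)$, and the weights $A_{k}$ in (\ref{total.cov}) cancel, yielding (\ref{caching_impact}). Your observation that the $1/A_{k}$ from Lemma~\ref{lemma1} is what makes a closed form for the association probabilities unnecessary, and the single-tier sanity check, are both consistent with the paper's own remarks following the corollary.
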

Equation (\ref{caching_impact}) reflects the impact of the probabilistic availability of file $n$ at BSs across different tiers on the coverage probability. Setting $p_{nj} = 1$, $\forall j$, leads back to the achievable coverage of conventional $K$-tier heterogeneous networks \cite{jo2012heterogeneous}, where the requested files were assumed to be available at all BSs. On the other hand, by setting $K = 1$, we obtain the coverage probability of the single tier CHN 
\begin{align}
C_{n} = \dfrac{p_{n1}}{p_{n1} + a_{1} \left(\rho_{1}(1)+ q_{n1} \rho_{2}(1)\right)},
\end{align}
which is consistent with the obtained result for the single tier scenario studied in [\cite{krishnan2015distributed}, Theorem $1$].
\section{Local Delay Analysis}
 Conditioned on the association of the typical user with the $k$-th tier to obtain file $n$, the local delay is defined as the mean number of time slots required for the typical user to successfully receive file $n$ from its serving BS. We assume that if the typical user fails to decode the transmitted file $n$ in a certain time slot, the file is retransmitted in the next time slot. The conditional success probability in an arbitrary time slot is defined as $P_{C_{nk}\left.\right| \phi} = \mathbb{P}\left(\text{SIR}_{k}(x_{nk}) > \tau \left.\right| \phi,x_{nk}\right)$, where $\phi = \cup_{j \in \mathcal{K}}\phi_{j}$. Thus, due to the independence of conditional success events over time, the conditional local delay is a geometric random variable with mean $1/P_{C_{nk}\left.\right| \phi}$ \cite{nie2015hetnets}. Hence, the local delay of obtaining file $n$ can be expressed as
\begin{align}\label{delay.nk}
D_{nk} = \mathbb{E}_{\phi,x_{nk}}\left[\dfrac{1}{\mathbb{P}\left(\text{SIR}_{k}(x_{nk}) > \tau \left.\right| \phi,x_{nk}\right)}\right],
\end{align}
thus from the total probability law, the local delay of obtaining file $n$ is given by
\begin{align}\label{delay.total}
D_{n} = \sum_{k=1}^{K}{A_{k} D_{nk}}.
\end{align}

By applying similar analysis as in (\ref{cov.nk}), $D_{nk}$ can be expressed as  
\small
\begin{align}
\nonumber D_{nk} &= \int_{0}^{\infty} \mathbb{E}_{\phi}\left[\prod_{j=1}^{K}{\dfrac{1}{\mathcal{L}_{I_{j}}\left(\dfrac{x_{nk}^{\alpha_{k}}\tau}{P_{k}}\left.\right| \phi_{j},x_{nk}\right)}}\right] f_{X_{nk}}(x_{nk}) \;\text{d}x_{nk}\\
\label{dnk.mid}&\overset{(a)}{=}\int_{0}^{\infty} \prod_{j=1}^{K}{\mathbb{E}_{\phi_{j}}\left[\dfrac{1}{\mathcal{L}_{I_{j}}\left(\dfrac{x_{nk}^{\alpha_{k}}\tau}{P_{k}}\left.\right| \phi_{j},x_{nk}\right)}\right]} f_{X_{nk}}(x_{nk}) \;\text{d}x_{nk},
\end{align}
\normalsize

where (a) follows from the independence of the PPPs of different tiers. In order to characterize $D_{nk}$, we shall now calculate the expectation inside  the integral in (\ref{dnk.mid}) with respect to each PPP. We first derive the expectation with respect to the PPP of the serving tier, which can be determined by considering the conditional Laplace transform of the interference power from outside and inside the circle of radius $x_{nk}$. Hence, from (\ref{Iout.mid}), we have

\small
\begin{align}\label{exp.k.out}
\nonumber\mathbb{E}_{\phi_{k}}\left[\dfrac{1}{\mathcal{L}_{I_{k\left.\right| \phi_{k},x_{nk}}}^{out}}\right] &= \text{exp}\left[2 \pi \lambda_{k} a_{k} \int_{x_{nk}}^{\infty} \dfrac{\tau y}{\tau - \tau a_{k} + \left(\dfrac{y}{x_{nk}}\right)^{\alpha_{k}}}\text{d}y\right] \\
&\overset{(a)}{=} \text{exp}\left[\pi \lambda_{k} a_{k} \rho_{3}(k) x_{nk}^2\right],
\end{align}
\normalsize
where (a) follows from the change of variables $u = \left(y/x_{nk}\right)^2 \tau^{-2/\alpha_{k}}$ and $\rho_{3}(m)$ is given by
\begin{align}
\rho_{3}(m)= \tau^{2/\alpha_{m}}\int_{\tau^{-2/\alpha_{m}}}^{\infty} \dfrac{1}{1 - a_{m} + u^{\alpha_{m}/2}}\; \text{d}u.
\end{align} 
Similarly, considering the interference from inside the circle of radius $x_{nk}$, we have
\begin{align}\label{exp.k.ins}
\mathbb{E}_{\phi_{k}}\left[\dfrac{1}{\mathcal{L}_{I_{k\left.\right| \phi_{k},x_{nk}}}^{ins}}\right] = \text{exp}\left[\pi \lambda_{k} q_{nk} a_{k} \rho_{4}(k) x_{nk}^2\right],
\end{align}
where $\rho_{4}(m)$ is given by
\small
\begin{align}
\rho_{4}(m)= \tau^{2/\alpha_{m}}\int_{0}^{\tau^{-2/\alpha_{m}}} \dfrac{1}{1 - a_{m} + u^{\alpha_{m}/2}}\; \text{d}u.
\end{align}
\normalsize

Next, we derive the expectation inside the integral in (\ref{dnk.mid}) with respect to the PPP of the $j$-th tier, $\forall j \neq k$. The corresponding expressions to (\ref{exp.k.out}) and (\ref{exp.k.ins}) for the $j$-th tier case can be found is a similar way, and given respectively by

\small
\begin{align}\label{exp.j.out}
\mathbb{E}_{\phi_{j}}\left[\dfrac{1}{\mathcal{L}_{I_{j\left.\right| \phi_{j},x_{nk}}}^{out}}\right] = \text{exp}\left[\pi \lambda_{j} a_{j} \bar{P}_{j}^{2/\alpha_{j}} \rho_{3}(j) x_{nk}^{2/\bar{\alpha}_{j}} \right],\; \forall j \neq k,
\end{align}
\begin{align}\label{exp.j.ins}
\mathbb{E}_{\phi_{j}}\left[\dfrac{1}{\mathcal{L}_{I_{j\left.\right| \phi_{j},x_{nk}}}^{ins}}\right] = \text{exp}\left[\pi \lambda_{j} q_{nj} a_{j} \bar{P}_{j}^{2/\alpha_{j}} \rho_{4}(j) x_{nk}^{2/\bar{\alpha}_{j}} \right],\; \forall j \neq k.
\end{align}
\normalsize
From (\ref{exp.k.out}), (\ref{exp.k.ins}), (\ref{exp.j.out}) and (\ref{exp.j.ins}) along with taking into account that $\bar{P}_{k} = \bar{\alpha}_{k} = 1$, we obtain that
\small
\begin{align}\label{exp.all}
\mathbb{E}_{\phi_{j}}\left[\dfrac{1}{\mathcal{L}_{I_{j\left.\right| \phi_{j},x_{nk}}}}\right] = \text{exp}\left[\pi \lambda_{j} a_{j} \bar{P}_{j}^{2/\alpha_{j}} x_{nk}^{2/\bar{\alpha}_{j}} \left(\rho_{3}(j) + q_{nj} \rho_{4}(j)\right)\right],\forall j.
\end{align}
\normalsize 
Plugging (\ref{exp.all}) into (\ref{dnk.mid}) yields the local delay of obtaining file $n$ when the typical user is associated with the $k$-th tier, as established by the following theorem.
\begin{theorem}\label{theorem2}
The local delay of obtaining file $n$ when the typical user is associated with the $k$-th tier is
\small
\begin{dmath}\label{dnk.final}
D_{nk}=\int_{0}^{\infty} \dfrac{2 \pi p_{nk} \lambda_{k} x_{nk}}{A_{k}} \text{exp}\left[-\pi \sum_{j=1}^{K}{\lambda_{j} \bar{P}_{j}^{2/\alpha_{j}} x_{nk}^{2/ \bar{\alpha}_{j}}\left( p_{nj} - a_{j} \left( \rho_{3}\left(j\right) + q_{nj} \rho_{4}\left(j\right)\right)\right)}\right]\;\text{d}x_{nk}.
\end{dmath}
\end{theorem}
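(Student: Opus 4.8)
The plan is to obtain $D_{nk}$ by direct substitution into the integral representation (\ref{dnk.mid}), which has already reduced the local delay to an $x_{nk}$-integral of a product over tiers of the PPP-averaged reciprocal Laplace transforms $\mathbb{E}_{\phi_{j}}\!\left[1/\mathcal{L}_{I_{j}}(x_{nk}^{\alpha_{k}}\tau/P_{k}\mid\phi_{j},x_{nk})\right]$, weighted by $f_{X_{nk}}$. The two inputs needed are already in hand: equation (\ref{exp.all}), which evaluates each such factor as $\exp\!\left[\pi\lambda_{j}a_{j}\bar{P}_{j}^{2/\alpha_{j}}x_{nk}^{2/\bar{\alpha}_{j}}(\rho_{3}(j)+q_{nj}\rho_{4}(j))\right]$ for every $j$ (the serving tier $k$ included, since $\bar{P}_{k}=\bar{\alpha}_{k}=1$ makes its term fit the same template), and Lemma \ref{lemma1}, which supplies $f_{X_{nk}}(x_{nk})=\frac{2\pi p_{nk}\lambda_{k}x_{nk}}{A_{k}}\exp\!\left[-\pi\sum_{j=1}^{K}p_{nj}\lambda_{j}\bar{P}_{j}^{2/\alpha_{j}}x_{nk}^{2/\bar{\alpha}_{j}}\right]$.

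First I would replace each factor in the product $\prod_{j=1}^{K}\mathbb{E}_{\phi_{j}}[\cdots]$ appearing in (\ref{dnk.mid}) by the right-hand side of (\ref{exp.all}); since each factor is an exponential, the product collapses to a single exponential whose argument is $\pi\sum_{j=1}^{K}\lambda_{j}a_{j}\bar{P}_{j}^{2/\alpha_{j}}x_{nk}^{2/\bar{\alpha}_{j}}(\rho_{3}(j)+q_{nj}\rho_{4}(j))$. Next I would insert $f_{X_{nk}}$ from Lemma \ref{lemma1}, pull out the prefactor $\frac{2\pi p_{nk}\lambda_{k}x_{nk}}{A_{k}}$, and merge the two remaining exponentials. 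Because both exponents are sums over the same index $j$ with matching multipliers $\lambda_{j}\bar{P}_{j}^{2/\alpha_{j}}x_{nk}^{2/\bar{\alpha}_{j}}$, they combine term by term into $-\pi\sum_{j=1}^{K}\lambda_{j}\bar{P}_{j}^{2/\alpha_{j}}x_{nk}^{2/\bar{\alpha}_{j}}\bigl(p_{nj}-a_{j}(\rho_{3}(j)+q_{nj}\rho_{4}(j))\bigr)$, which is precisely the exponent in (\ref{dnk.final}); collecting this with the prefactor and the integral over $x_{nk}$ gives the claimed expression.

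There is no real obstacle at the level of this theorem: the probabilistic content --- conditioning on the point process, using the independence of the per-tier interferences to factor the conditional success probability into $\prod_{j}\mathcal{L}_{I_{j}}(\cdot\mid\phi_{j},x_{nk})$, inverting the conditional Laplace transform inside the PPP average, and the change of variables producing $\rho_{3}$ and $\rho_{4}$ with the characteristic $1-a_{m}$ in the denominator (reflecting that reciprocal-Laplace-transform averaging amplifies rather than attenuates nearby interferers) --- is all carried out in (\ref{exp.k.out})--(\ref{exp.all}). The only points requiring care in the writeup are confirming that the $j=k$ term genuinely obeys the unified template of (\ref{exp.all}) through $\bar{P}_{k}=\bar{\alpha}_{k}=1$, and verifying that the summands of (\ref{exp.all}) and of the Lemma \ref{lemma1} exponent are aligned index by index so that the subtraction $p_{nj}-a_{j}(\rho_{3}(j)+q_{nj}\rho_{4}(j))$ is legitimate. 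No claim of finiteness is made, so the well-known phase transition at which the local delay diverges --- when the bracketed coefficient fails to remain positive --- need not be addressed here.
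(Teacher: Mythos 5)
Your proposal is correct and follows exactly the paper's own route: the paper obtains Theorem \ref{theorem2} precisely by plugging (\ref{exp.all}) into (\ref{dnk.mid}) together with the PDF from Lemma \ref{lemma1} and merging the exponentials, which is what you do. The algebra producing the coefficient $p_{nj} - a_{j}\left(\rho_{3}(j) + q_{nj}\rho_{4}(j)\right)$ checks out term by term, so there is nothing to add.
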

\normalsize
Based on Theorem \ref{theorem2}, the local delay of obtaining file $n$, $D_{n}$, is determined by plugging $D_{nk}$ (\ref{dnk.final}) into (\ref{delay.total}). Now, for the special case of having $\alpha_{j} = \alpha,\;\forall j$, we provide the local delay of obtaining file $n$ in the following Corollary.
\begin{figure*}[htp]
\centerline{
\subfloat[Coverage probability vs. $\tau$.]{\includegraphics[width=2.2in,height= 1.5in]{./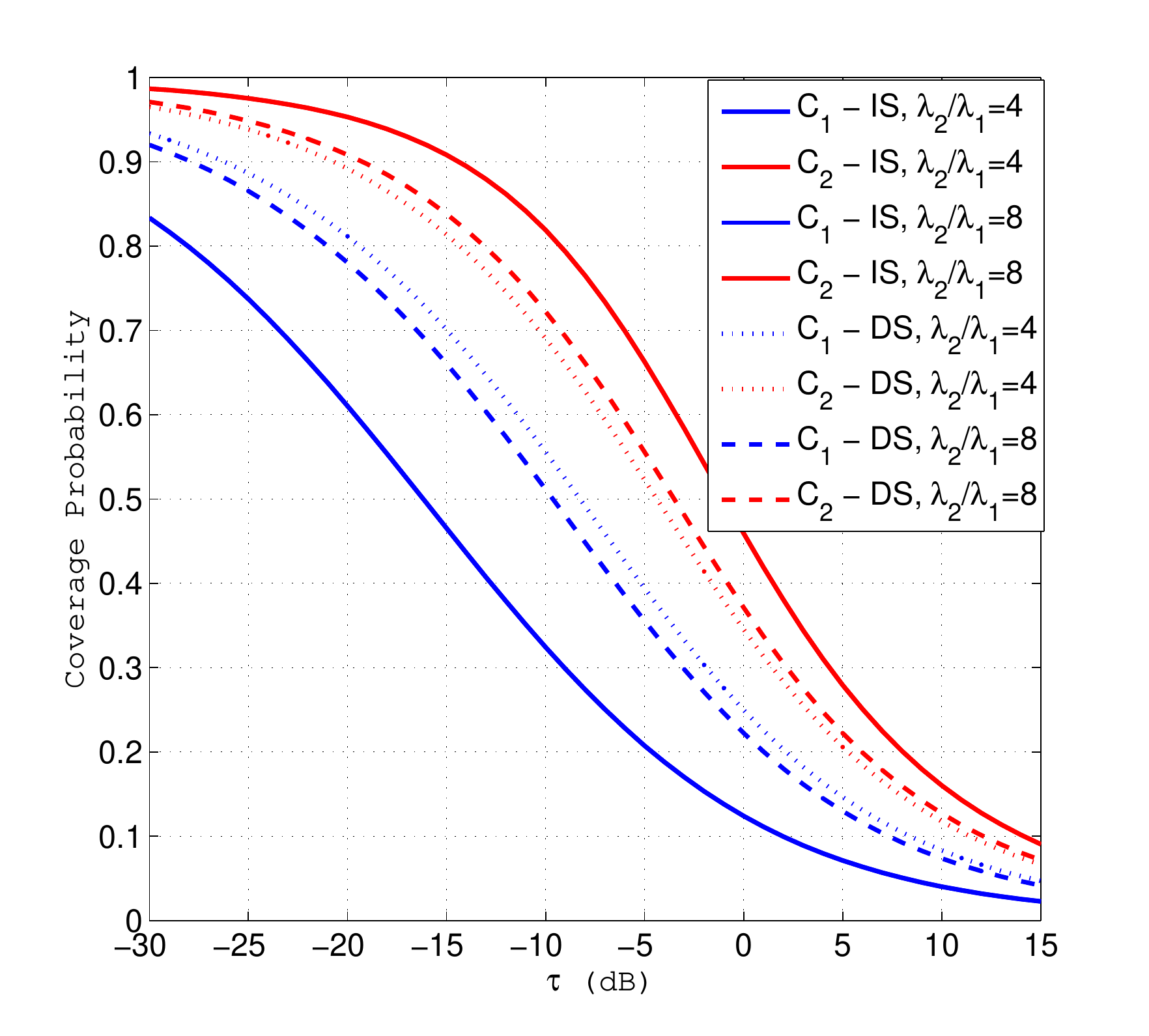}%
\label{1}} \hfil
\subfloat[Local delay vs. $\tau$.]{\includegraphics[width=2.2in,height= 1.5in]{./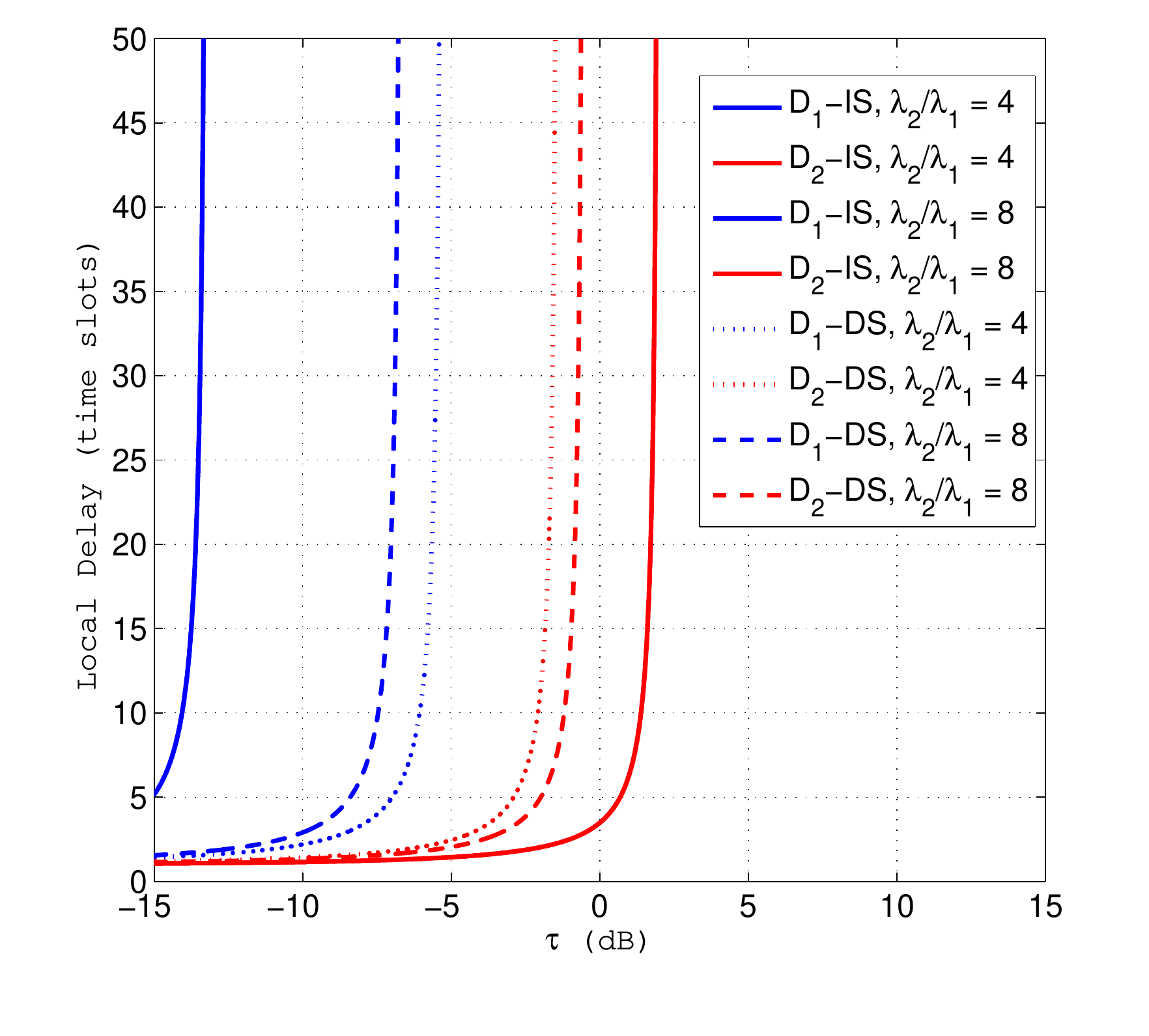}%
\label{2}} \hfil
\subfloat[Coverage probability vs. $a_{1}$ and $a_{2}$.]{\includegraphics[width=1.9in,height= 1.5in]{./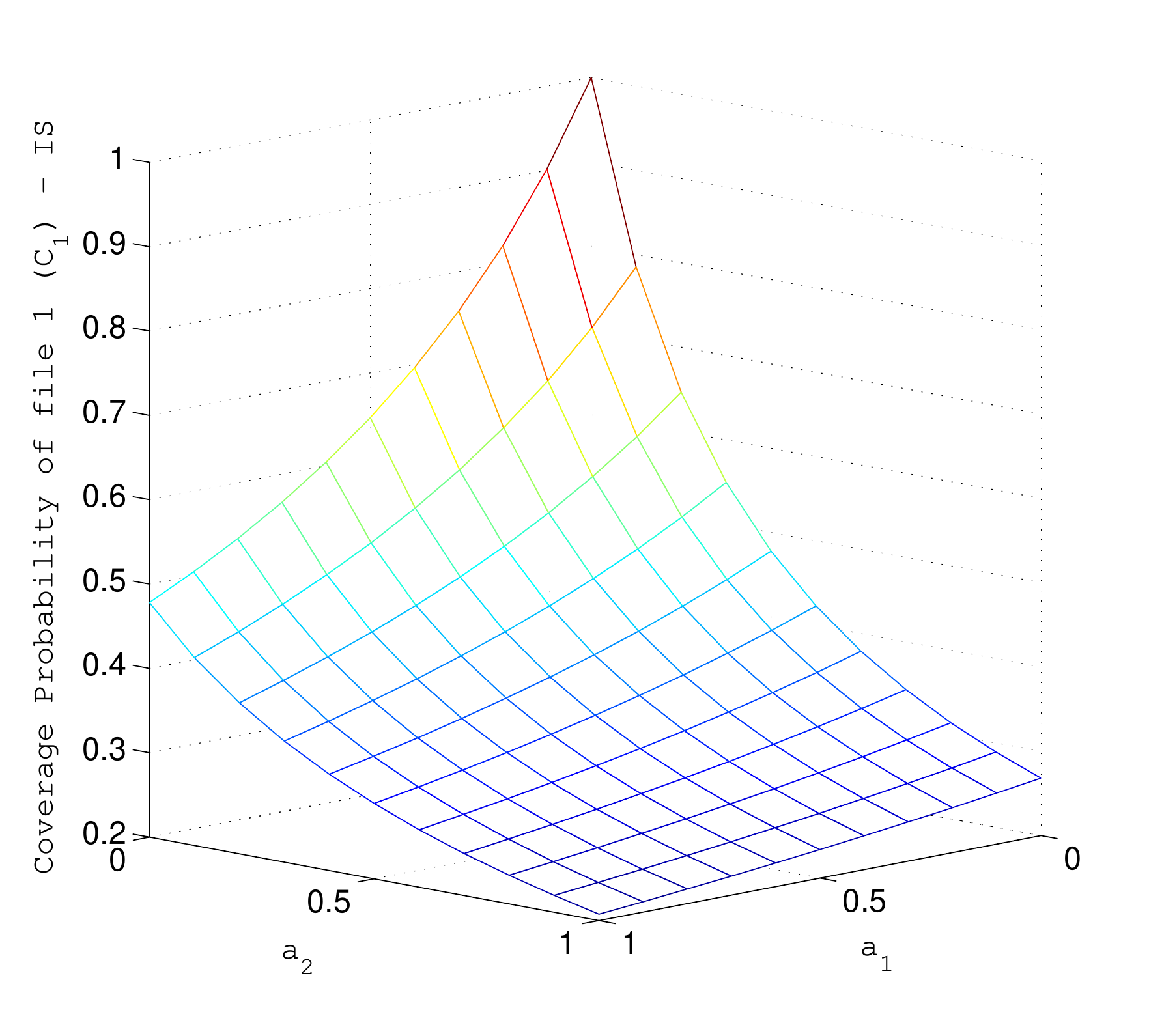}%
\label{3}} } \caption{Numerical Results.}
\end{figure*}
\begin{corollary}\label{corollary2}
The local delay of obtaining file $n$ when $\alpha_{j} = \alpha,\; \forall j$, i.e., $\bar{\alpha}_{j} = 1,\; \forall j$, is
\begin{align}
D_{n} = \sum_{k=1}^{K}{\dfrac{p_{nk} \lambda_{k}}{\sum_{j=1}^{K}{\lambda_{j} \bar{P}_{j}^{2/\alpha} \left[p_{nj} - a_{j} \left(\rho_{3}(j) + q_{nj} \rho_{4}(j)\right)\right]}}}.
\end{align}
\end{corollary}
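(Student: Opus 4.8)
The plan is to specialize Theorem~\ref{theorem2} to the case $\bar{\alpha}_{j}=1$ for all $j$, evaluate the resulting single integral in closed form, and then recombine the per-tier contributions through the total probability law~(\ref{delay.total}).

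First I would set $\bar{\alpha}_{j}=1$ in~(\ref{dnk.final}), so that $x_{nk}^{2/\bar{\alpha}_{j}}=x_{nk}^{2}$ and $\bar{P}_{j}^{2/\alpha_{j}}=\bar{P}_{j}^{2/\alpha}$ for every $j$. The bracketed sum in the exponent then shares the common factor $x_{nk}^{2}$, so the integrand reduces to $\frac{2\pi p_{nk}\lambda_{k}}{A_{k}}\,x_{nk}\exp(-\pi c_{nk}x_{nk}^{2})$ with $c_{nk}=\sum_{j=1}^{K}\lambda_{j}\bar{P}_{j}^{2/\alpha}\bigl(p_{nj}-a_{j}(\rho_{3}(j)+q_{nj}\rho_{4}(j))\bigr)$. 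I would then invoke the elementary identity $\int_{0}^{\infty}x\,e^{-\pi c x^{2}}\,\text{d}x=\tfrac{1}{2\pi c}$ (valid for $c>0$, via the substitution $u=x^{2}$), which collapses the integral to $D_{nk}=\frac{p_{nk}\lambda_{k}}{A_{k}c_{nk}}$; note that both the $2\pi$ and the $1/A_{k}$ normalization inherited from Lemma~\ref{lemma1} drop out cleanly.

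Next I would substitute $D_{nk}$ into $D_{n}=\sum_{k=1}^{K}A_{k}D_{nk}$ from~(\ref{delay.total}); the factor $A_{k}$ cancels against the $1/A_{k}$ inside $D_{nk}$, leaving $D_{n}=\sum_{k=1}^{K}\frac{p_{nk}\lambda_{k}}{c_{nk}}$. Expanding $c_{nk}$ and recalling $\bar{P}_{j}=P_{j}/P_{k}$ (so the denominator genuinely depends on $k$) reproduces exactly the claimed expression, which finishes the derivation. In particular, $A_{k}$ is never needed in closed form, which is why no explicit association-probability formula is required here.

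The one step I would expect to need genuine care, rather than pure bookkeeping, is the convergence of the Gaussian-type integral: it is finite precisely when $c_{nk}>0$, i.e.\ when the caching ``mass'' $\sum_{j}\lambda_{j}\bar{P}_{j}^{2/\alpha}p_{nj}$ strictly dominates the interference contribution $\sum_{j}\lambda_{j}\bar{P}_{j}^{2/\alpha}a_{j}(\rho_{3}(j)+q_{nj}\rho_{4}(j))$; otherwise $D_{nk}=\infty$ and the local delay displays the familiar wireless phase transition. I would state this finiteness condition explicitly alongside the corollary. Everything else follows mechanically once Theorem~\ref{theorem2} is in hand.
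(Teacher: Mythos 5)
Your proposal is correct and follows exactly the route the paper intends: specialize Theorem~\ref{theorem2} to $\bar{\alpha}_{j}=1$, evaluate $\int_{0}^{\infty}x\,e^{-\pi c x^{2}}\,\text{d}x=\tfrac{1}{2\pi c}$, and let the $A_{k}$ factors cancel in~(\ref{delay.total}), just as in Corollary~\ref{corollary1}. Your added remark on the finiteness condition $p_{nj}$-mass dominating $a_{j}\left(\rho_{3}(j)+q_{nj}\rho_{4}(j)\right)$ is a sensible refinement the paper leaves implicit, but it does not change the argument.
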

Setting $K = 1$, leads back to the single tier case where the local delay is given by
\begin{align}
D_{n} = \dfrac{p_{n1}}{p_{n1} - a_{1} \left(\rho_{3}(1) + q_{n1} \rho_{4}(1)\right)},
\end{align}
which is consistent with the obtained result for the single tier scenario studied in [\cite{krishnan2015distributed}, Lemma $5$].
\section{Numerical results}
For clarity of exposition, we focus on a two-tier CHN with a two-file distributed caching scenario, i.e., $K = 2$ and $M = 2$. We further assume that $S_{1} = S_{2} = 1$, i.e., each BS in the network either caches file $1$ or file $2$ only. If not otherwise stated, we use the following parameters, $P_{2} = 0.1$, $P_{1} = 100 P_{2}$, $\lambda_{1} = 1$, $a_{1} = a_{2} = 1$ and $\alpha_{1} = \alpha_{2} = 4$. We consider two different placement caching strategies: 1) Identical placement caching strategy (IS), in which each file has the same caching probability across the BSs of the two tiers, and 2) Different placement caching strategy (DS), in which each file has a different caching probability across the BSs of the two tiers. Particularly, in IS, we use $p_{11} = p_{12} =0.2$ and $p_{21} = p_{22} = 0.8$, whereas in DS, we use $p_{11} = p_{21} = 0.5$, $p_{12} = 0.2$ and $p_{22} = 0.8$. 

In Fig. \ref{1} and \ref{2}, we plot the coverage probability and the local delay, respectively, as a function of $\tau$ for different values of $\lambda_{2}/\lambda_{1}$ and for caching strategies IS and DS. We use $a_{1} = a_{2} = 0.5$ in Fig. \ref{2}. We observe that the coverage probability monotonically decreases as $\tau$ increases, whereas the local delay monotonically increases as $\tau$ increases. Also, file $2$, which has a higher caching probability than file $1$ in both scenarios IS and DS, has a higher coverage and a lower delay in each caching strategy for every $\lambda_{2}/\lambda_{1}$. Third, considering DS scenario, as $\lambda_{2}/\lambda_{1}$ increases, file $2$ experiences a higher coverage and a lower delay while file $1$ experiences a lower coverage and a higher delay. This is due to the fact that increasing $\lambda_{2}/\lambda_{1}$ increases the density of the second tier in which file $2$ has a higher caching probability. Finally, considering IS scenario, as expected, increasing $\lambda_{2}/\lambda_{1}$ has no effect on the achievable coverage and delay of each file. This can be seen from the expressions of coverage and delay in Corollary \ref{corollary1} and \ref{corollary2}, in which coverage and delay do not depend on the spatial densities of tiers for the scenario of IS.

Fig. \ref{3} shows the effect of changing the activity probabilities on the achievable coverage probability of file $1$ in IS case. We use $\lambda_{2} = 4 \lambda_{1}$ and $\tau = -5$ dB. It is observed that the coverage probability is a monotonically decreasing function in both $a_{1}$ and $a_{2}$. This is attributed to the fact that either increasing $a_{1}$ or $a_{2}$ leads to a higher number of active BSs, and hence a higher received interference power at the serving BS which in turn leads to a lower coverage.
\section{Conclusion}
This paper focuses on the performance analysis of a generic $K$-tier cache-aided heterogeneous network. For the considered generic model, we drive closed-form expressions for both the coverage probability and local delay of obtaining each file cached in the network. Our results are consistent with those in the literature for the scenario of single tier cache-enabled heterogeneous network.
As part of our future work, we would like to characterize: 1) the performance of closed access cell association strategy, and 2) the optimal placement caching strategies to maximize the hit probability.

\end{document}